	\theoremstyle{thmstyleone}%
	\newtheorem{theorem}{Theorem}
	\theoremstyle{thmstyletwo}%
	\theoremstyle{thmstylethree}%
	\newtheorem{definition}{Definition}%
\begin{document}
		%
		\title{On the Optimality of a Quantum Key Distribution}
		%
		%
		%
		
		\author{Georgi~Bebrov
			
			\thanks{G. Bebrov was with the Department
				of Telecommunications, Technical University of Varna, Varna, 9010 BULGARIA e-mail: g.bebrov@tu-varna.bg.}
			}
		
		%
		%

	\markboth{Journal of \LaTeX\ Class Files}%
	{Shell \MakeLowercase{\textit{et al.}}: Bare Demo of IEEEtran.cls for IEEE Journals}
	%



	\maketitle
	
	\begin{abstract}
		Quantum key distribution (QKD) systems require optimal performance of both quantum and classical channels\textemdash utilizing as few as possible qubits and bits for establishing as many as possible key bits. Here we report a way to determine if a quantum key distribution model (or protocol) operates in an optimal behavior. This is accomplished by introducing a quantity, called \textit{optimality}, which is the maximum over the total efficiency of a QKD under any circumstances (any values of QKD parameters). The optimality definition is given for the asymptotic operation of a QKD system\textemdash when infinitely many quantum systems are transferred/used in a quantum key distribution protocol \textit{or} a quantum key distribution system is used infinitely many times. A way to attain the optimality is considered\textemdash implementation of a completely efficient QKD system (a combination of capacity-reaching quantum channel and a completely compressed classical channel) is presented. Optimal versions of BB84-QKD and twin-field QKD are introduced.
	\end{abstract}
	
	\begin{IEEEkeywords}
	quantum key distribution, efficiency, optimality
	\end{IEEEkeywords}

	%
	\IEEEpeerreviewmaketitle
	
	\section{Introduction}\label{sec1}
	\IEEEPARstart{T}{he} process of \textit{quantum key distribution} is a technology that enables establishing cryptographic keys between two or more parties (participants). Its security relies upon the principles of quantum mechanics. Throughout the years several prominent QKD models have been developed \cite{Bennett1984,Ekert1991,Lo2005,Lo2012,Lucamarini2018,Curty2019,Wang2018}. As with any other system (technology), quantum key distribution should reach after its optimal operation (performance). The optimal performance is related to using efficiently both quantum and classical channels of a QKD system. As shown in the literature \cite{Lo2005,Curty2019}, it is possible for a QKD implementation (existing QKD protocol) to have an efficient quantum channel. This is achieved by exploiting biased bases\textemdash one of the preparation/measurement bases is characterized by a considerably greater probability of occurrence $p$ in comparison to the other $\text{1}-p$, \textit{\textit{i.e.}}, $p\rightarrow\text{1}$. In the asymptotic limit $p\rightarrow\text{1}$, the quantum channel of an implementable QKD approximately resembles an ideal quantum channel. On the other hand, no one in the literature contemplates an efficient classical channel (reduction in the use of a classical channel). In this connection, this article presents a way for a quantum key distribution to reach an optimal behavior. We develop a \textit{compression} (or \textit{coding}) \cite{Huffman1952} process to attain an efficient classical channel (Section \ref{compression}). Note that the compression process leads to an efficient classical channel in the asymptotic limit of $p\rightarrow\text{1}$ (completely biased bases are used) 
	and $n\rightarrow\infty$ ($n$\textemdash number of transferred quantum systems in a QKD protocol or number of uses of a QKD system). Optimality of a QKD is reached by combining both the concept of biased preparation/measurement bases and compression. There is also a lack of optimality evaluation in the literature.
	In this regard, the work has the following objectives
	\begin{enumerate}
		\item proposing a definition for optimality of a QKD
		\item proposing a method of determining the optimality of a QKD 
		\item proposing a compression appropriate to obtain an optimal QKD
		\item proposing optimal BB84-QKD in order to justify the optimality evaluation.
	\end{enumerate}
	Note that the proposed method could be used as a tool in the process of developing a QKD and in the process of comparing QKD models.\\
	\indent The paper is organized as follows. Section~\ref{bb84-steps} recalls in a resource-wise manner the BB84-QKD protocol that is used as a figure of merit for defining optimality of a QKD. Section \ref{opt-def} presents a definition for optimality of a QKD. Section \ref{opt-eval} presents optimality evaluation for a BB84-QKD. In this section, a comparison is presented between the total efficiencies of optimal BB84 and standard BB84 in order to highlight the application of the optimality proposed herein. Section \ref{compression} introduces a compression process (called \textit{channel squeezing}), which could be used to attain an efficient classical channel, correspondingly an optimal QKD protocol. Section \ref{protocols} proposes two optimal QKD protocols\textemdash one of BB84 type \cite{Bennett1984} and one of twin-field (TF) type \cite{Lucamarini2018,Curty2019}. Section \ref{summary} put forwards a summary.

	\section{BB84-QKD}\label{bb84-steps}
	For the purposes of this paper, we recall BB84-QKD in a resourse-wise perspective. The protocol consists of the following steps:\\
	

	\begin{tcolorbox}[colback=yellow!5,colframe=yellow!40!black,title=\textsc{Qubit transfer}]
		\textbf{\textit{q-1})} $N$ qubits are transferred from \textit{Alice} to \textit{Bob}. Each transferred qubit corresponds to a bit: $0$ or $1$ (the value depends on the state of the transferred qubit).\\
		\textbf{\textit{q-2})} $\Tilde{\eta}N$ qubits are received by \textit{Bob}, where $\Tilde{\eta}=\eta\eta_{\text{det}}$ denotes the total transmittance of the QKD link ($\eta=10^{-\alpha L/10}$--channel transmittance; $\eta_{\text{det}}$--detection efficiency). Note that $L$ is the length of QKD link (in kilometers) and $\alpha$ is the channel attenuation. Each received/detected qubit corresponds to a bit: $0$ or $1$ (the value depends on the state of the received qubit).\\
	\end{tcolorbox}
	
	\begin{tcolorbox}[colback=blue!5,colframe=blue!40!black,title=\textsc{Sifting}]
		\textbf{\textit{c-1})} $N$ bits are announced to acknowledge proper reception of qubits ($0\rightarrow$ not received/detected; $1\rightarrow$ received/detected). Note that \textit{Bob} is aware of the time windows at which qubits should arrive at his detection system. \\
		\textbf{\textit{c-2})} $(1-\sigma)\Tilde{\eta}N$ bits are broadcasted by \textit{Bob} to announce measurement bases used in detection step \textbf{\textit{q-2})} ($0\rightarrow$ rectilinear $Z$ basis; $1\rightarrow$ diagonal $X$ basis). The quantity $\sigma$ ($\sigma\in[0,1]$) denotes the \textit{compression} of the classical channel when a proper encoding is applied to the announced bits.\\
		\textbf{\textit{c-3})} $(1-\sigma)\Tilde{\eta}N$ bits are broadcasted by \textit{Alice} to acknowledge preparation-measurement bases match ($0\rightarrow$ match of preparation and measurement bases for a certain qubit received by \textit{Bob}; $1\rightarrow$ no match of preparation and measurement bases for a certain qubit received by \textit{Bob}).\\
		\textbf{\textit{c-4})} \textit{Alice} and \textit{Bob} discard all the qubits (bits transferred/received by qubits) for which there is no match between preparation and measurement bases. The variable $s$ (\textit{parameter of sifting}) denotes the percentage of preparation-measurement bases match ($s\in[0,1]$). As a result, only $s\Tilde{\eta}N$ bits remain at the end of the sifting procedure. Those bits compose the so-called \textit{sifted key}.
	\end{tcolorbox}
	\begin{tcolorbox}[colback=blue!5,colframe=blue!40!black,title=\textsc{Parameter estimation}]
		\textbf{\textit{c-5})} \textit{Alice} and \textit{Bob} sacrifice a relatively negligible part $\delta$ of the sifted-key bits in order to evaluate the \textit{quantum bit error rate} of the quantum communcation channel. Note that the protocol is considered to operate in an asymptotic regime ($N\rightarrow\infty$). As a result, only $(1-\delta)s\Tilde{\eta}N$ bits remain at the end of the parameter estimation procedure. For the asymptotic regime of operation, one assumes that $\delta\rightarrow 0$.
	\end{tcolorbox}
	\begin{tcolorbox}[colback=blue!5,colframe=blue!40!black,title=\textsc{Error correction}]
		\textbf{\textit{c-6})} $(1-\delta)(s\Tilde{\eta}N)fH(e)$ bits are announced by either \textit{Alice} or \textit{Bob} in order for error correction (information reconciliation) to be performed. The quantity $e$ denotes the quantum bit error rate. The quantity $f$ denotes the efficiency of the error correction algorithm (usually $f$ $=$ $1.1-1.2$). The function $H(\cdot)$ is the so-called \textit{Shannon entropy}. As a result, only $(1-\delta)s\Tilde{\eta}N$ $-$ $(1-\delta)(s\Tilde{\eta}N)fH(e)$ bits remain at the end of the error correction procedure.\\
	\end{tcolorbox}
	\begin{tcolorbox}[colback=blue!5,colframe=blue!40!black,title=\textsc{Privacy amplification}]
		\textbf{\textit{c-7})} $(1-\delta)s\Tilde{\eta}Q$ $-$ $(1-\delta)(s\Tilde{\eta}N)fH(e)$ $+$ $RN$ $-$ $1$ bits are announced by either \textit{Alice} or \textit{Bob} in order for privacy amplification to be performed. The quantity $R$, known as \textit{key} rate, is evaluated by the expression $R$ $=$ $(1-\delta)s(\xi - H(e) - fH(e))$ \cite{Gottesman}. The quantity $\xi$ denotes the confidential capacity of the protocol \cite{Bebrov2024}. Note that Toeplitz-based privacy amplification is considered. As a result, only $(1-\delta)s\Tilde{\eta}N$ $-$ $(1-\delta)(s\Tilde{\eta}N)fH(e)$ $-$ $[(1-\delta)s\Tilde{\eta}N$ $-$ $(1-\delta)(s\Tilde{\eta}N)fH(e)$ $+$ $RN$ $-$ $1]$ bits remain at the end of the privacy amplification procedure.
	\end{tcolorbox}
	
	Note that $s$ and $\sigma$ depend on the probability distribution of choosing a certain preparation/measurement basis $\{p,1-p\}$, where $p$ is the probability of using $Z$ (rectilinear) basis and $1-p$ is the probability of using $X$ (diagonal) basis.
	
	\section{Optimality}\label{opt-def}
	\textit{Note}: We consider a QKD operating in an asymptotic regime ($N\rightarrow\infty$, $\delta\rightarrow0$), \textit{\textit{i.e.}}, $N$ and $\delta$ are assumed to be fixed (see section \ref{bb84-steps} for reference concerning $N$ and $\delta$). That is why, $N$ and $\delta$ are not considered as protocol parameters.\\
	\indent Let first recall the quantity \textit{total efficiency} of a QKD. It is given by \cite{Bebrov}
	\begin{equation}
		\mathfrak{E} = \frac{R\cdot N}{N + M}
	\end{equation}
	where $N$ denotes the amount of qubits used in a QKD, $M$ denotes the amount of bits used in a QKD, and $R$ denotes the so-called secret key rate of a QKD. The quantity $R$ (single-photon secret key rate in this work) of a QKD model is given by \cite{Gottesman}
	\begin{equation}\label{skr-exp}
		R = \Tilde{\eta}\cdot s(\xi - H(e)-fH(e))
	\end{equation}
	where $f$ is the efficiency of the error correction algorithm, $s$ is sifting coefficient ($s \in (0,1]$), $H(\cdot)$ is Shannon entropy, $e$ is the error rate of the QKD system. Here $\Tilde{\eta}$ $=$ $\eta\cdot\eta_{\text{det}}$ represents the probability of detecting a quantum system at receiving side ($\eta$ $=$ $10^{\frac{-\alpha L}{10}}$---channel transmittance; $\eta_{\text{det}}$---detector's efficiency; $\alpha$---channel attenuation; $L$---length of the QKD system). The error rate $e$ is evaluated as follows \cite{Ma}
	\begin{equation}
		e = \frac{e_0p_{\text{dark}}+e_{\text{opt}}\Tilde{\eta}}{y_1}
	\end{equation}
	where $e_0$ is the background noise (usually $e_0 = 0.5$), $p_{\text{dark}}$ is the dark count rate of the detectors, $e_{\text{opt}}$ is the detection error rate (probability of a photon to be detected at the erroneous detector), and $y_1$ is the single-photon yield given by \cite{Ma}
	\begin{equation}
		y_1 = p_{\text{dark}} + \Tilde{\eta} - p_{\text{dark}}\Tilde{\eta}
	\end{equation}
	\indent The quantity $M$ is given by \cite{Bebrov}
	\begin{equation}\label{cbits-exp}
		M = \sum\limits_i M_i
	\end{equation}
	where $M_i$ is the amount of announced bits during the $i$th procedure (either sifting, or parameter estimation, or error correction, or privacy amplification)
	Based on protocol steps presented in section \ref{bb84-steps}, $M$ is a function of $s$ and $\sigma$.

	The above lines imply that the total efficiency $\mathfrak{E}$ is a function of $L$, $\alpha$, $\eta_{\text{det}}$, $p_{\text{dark}}$, $e_0$, $e_{\text{opt}}$, $f$, $s$, $\xi$. \\
	\indent Let now proceed to the \textit{optimality} of a QKD. The optimality is defined as
	\begin{equation}\label{opt-expr}
		\mathfrak{O} = \max \mathfrak{E}
	\end{equation}
	where the maximization is taken over all protocol parameters of a QKD model. For the sake of simplicity, we consider the case in which the protocol parameters $L$, $\alpha$, $\eta_{\text{det}}$, $p_{\text{dark}}$, $e_0$, $e_{\text{opt}}$, and $f$ are fixed. Fixing $L$, $\alpha$, $\eta_{\text{det}}$, $p_{\text{dark}}$, $e_0$, and $e_{\text{opt}}$, corresponds to utilizing a certain quantum communication channel, while fixing $f$ corresponds to utilizing a certain classical communication system for performing information reconciliation (error correction). Hence the total efficiency of a QKD remains a function of parameters $\xi$, $\sigma$, and $s$ ($\mathfrak{E}$ $=$ $f(\sigma,s,\xi)$):
	\begin{equation}
		\mathfrak{E} = \frac{R(s,\xi)\cdot N}{N + M(\sigma,s)}
	\end{equation}
	Note that increasing each variable ($s$, $\sigma$, and $\xi$) leads to increasing $\mathfrak{E}$, as verified by the above expression.\\
	\indent We now present an algorithm for obtaining the optimality of a QKD model.
	\begin{tcolorbox}[colback=magenta!5,colframe=magenta!40!black,title=\textsc{Algorithm (Method)---Determining optimality}]
		The algorithm aims to determine $\max\limits_{s,\sigma,\xi} \mathfrak{E}$. It consists of the following steps:
		\begin{enumerate}
			\item finding $\max(\xi)$ based on Ref. \cite{Bebrov2024}.
			\item utilizing an extremely biased choice of preparation/measurement bases in a QKD. This act leads to $s\rightarrow1$.
			\item implementing classical channel compression with $\sigma\rightarrow1$.
			\item evaluating $\mathfrak{E}$ for $\max(\xi)$, $\sigma\rightarrow1$, $s\rightarrow1$. 
		\end{enumerate}
	\end{tcolorbox}
	
	\section{Optimality Evaluation}\label{opt-eval}
	We evaluate the optimality of BB84-QKD protocol. Based on the protocol steps given in section \ref{bb84-steps}, the total efficiency of BB84-QKD has the following form
	\begin{equation}
		\mathfrak{E} = \frac{R\cdot N}{N + M} = \frac{R(s,\xi)\cdot N}{N + M(\sigma,s)}.
	\end{equation}
	Note that the parameters $L$, $\alpha$, $\eta_{\text{det}}$, $f$, $p_{\text{dark}}$, $e_{\text{opt}}$, and $e_0$ are fixed. Substituting for $R$ (see expression \eqref{skr-exp}) and $M$ in the last expression, one gets
	\begin{eqnarray}\label{eff-r-m}
		\mathfrak{E} = \frac{\Tilde{\eta}s(\xi-H(e)-fH(e))\cdot N}{N + [N+2(1-\sigma)\Tilde{\eta}N + \delta N + (1-\delta)(s\Tilde{\eta}N) + \Tilde{\eta}s(\xi-H(e)-fH(e))\cdot N - 1]}
	\end{eqnarray}
	Note that $M$ is obtained by adding up all the broadcasted (announced) bits in the protocol steps of section \ref{bb84-steps}, see \eqref{cbits-exp},
	\begin{eqnarray}
		M &&= N+2(1-\sigma)\Tilde{\eta}N + \delta N + (1-\delta)(s\Tilde{\eta}N)fH(e) + (1-\delta)(s\Tilde{\eta}N)  \nonumber \\
		&& - (1-\delta)(s\Tilde{\eta}N)fH(e) + \Tilde{\eta}s(\xi-H(e)-fH(e))\cdot N - 1 \nonumber \\
		&&= N+2(1-\sigma)\Tilde{\eta}N + \delta N + (1-\delta)(s\Tilde{\eta}N) + \Tilde{\eta}s(\xi-H(e)-fH(e))\cdot N - 1]
	\end{eqnarray}
	One can simplify expression \eqref{eff-r-m} as follows
	\begin{eqnarray}
		\mathfrak{E} &&= \frac{\Tilde{\eta}s(\xi-H(e)-fH(e))\cdot N}{N + [N+2(1-\sigma)\Tilde{\eta}N + \delta N + (1-\delta)(s\Tilde{\eta}N) + \Tilde{\eta}s(\xi-H(e)-fH(e))\cdot N - 1]} \nonumber \\
		&&= \frac{\Tilde{\eta}s(\xi-H(e)-fH(e))\cdot N}{N\cdot[2+2(1-\sigma)\Tilde{\eta} + \delta + (1-\delta)(s\Tilde{\eta}) + \Tilde{\eta}s(\xi-H(e)-fH(e)) - \frac{1}{N}]} \nonumber \\
		&&= \frac{\Tilde{\eta}s(\xi-H(e)-fH(e))}{2+2(1-\sigma)\Tilde{\eta} + \delta + (1-\delta)(s\Tilde{\eta}) + \Tilde{\eta}s(\xi-H(e)-fH(e)) - \frac{1}{N}}.
	\end{eqnarray}
	We consider an asymptotic regime of operation ($\delta\rightarrow0$, $N\rightarrow\infty$). The total efficiency of a BB84-QKD then becomes
	\begin{equation}
		\mathfrak{E} = \frac{\Tilde{\eta}s(\xi-H(e)-fH(e))}{2+2(1-\sigma)\Tilde{\eta} + s\Tilde{\eta} + \Tilde{\eta}s(\xi-H(e)-fH(e))}.
	\end{equation}
	Considering the optimal scenario of BB84-QKD ($\max(\xi)\overset{\text{BB84}}{=}1$, $\sigma\rightarrow1$, $s\rightarrow1$), one gets
	\begin{eqnarray}
		\mathfrak{O} &&= \mathfrak{E}_{\text{opt}} = \frac{\Tilde{\eta}(1-H(e)-fH(e))}{2+\Tilde{\eta} + \Tilde{\eta}(1-H(e)-fH(e))} = \frac{\Tilde{\eta}(1-H(e)-fH(e))}{\Tilde{\eta}[\frac{2}{\Tilde{\eta}} + 1 + (1-H(e)-fH(e))]} \nonumber \\
		&&= \frac{1-H(e)-fH(e)}{\frac{2}{\Tilde{\eta}} + 2 - H(e) - fH(e)}.
	\end{eqnarray}
	\indent In order to show the difference between optimal and standard BB84-QKD, we evaluate $\mathfrak{E}^{\text{BB84}}_{\text{opt}} (L)$ ($\mathfrak{O}^{\text{BB84}} (L)$) and $\mathfrak{E}^{\text{BB84}} (L)$, respectively. Note that $\mathfrak{E}^{\text{BB84}} (L)$ is evaluated given $s = 1/2$ and $\sigma = 0$, while $\mathfrak{O}^{\text{BB84}} (L)$ is evaluated given $s\rightarrow1$ and $\sigma\rightarrow1$. We have computed $\mathfrak{O}^{\text{BB84}} (L)$ and $\mathfrak{E}^{\text{BB84}} (L)$ in the case of a single-photon, asymptotic regime of operation and protocol parameters $\alpha$ $=$ $0.2$, $\eta_{\text{det}}$ $=$ $0.3$, $f$ $=$ $1$, $p_{\text{dark}}$ $=$ $10^{-8}$, $e_{\text{opt}}$ $=$ $0.03$, and $e_0$ $=$ $0.5$. The computation results are depicted in figure \ref{figure2}. Note that the red (top) plot in figure \ref{figure2} represents the efficiency limit of a BB84-QKD for the scenario considered herein (given the protocol parameters above).
	\begin{figure*}[!hbt]
		\centering
		\includegraphics[scale=0.6]{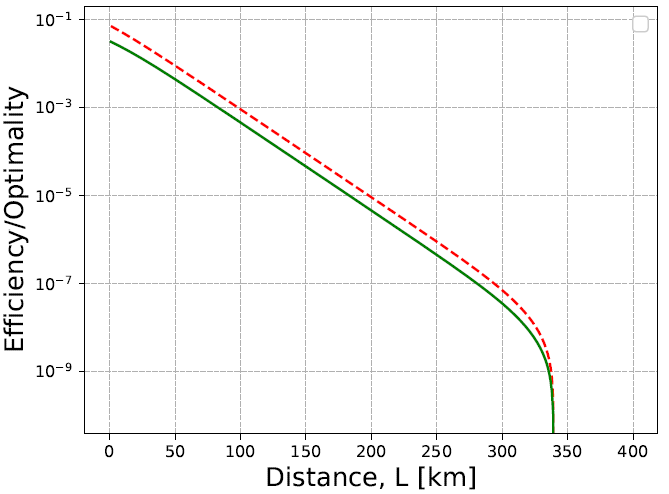}
		\caption{Total efficiency comparison between optimal BB84-QKD (presented in previous subsection) and standard BB84-QKD. Optimal BB84-QKD ($\mathfrak{O}^{\text{BB84}}(L)$)---dashed plot; Standard BB84-QKD ($\mathfrak{E}^{\text{BB84}}(L)$)---solid plot. \label{figure2}}
	\end{figure*}
	
	\section{Compression}\label{compression}
	In this section, a classical channel compression, being appropriate to obtain optimal QKD, is proposed.
	\subsection{Compression definition}
	As mentioned above, in order to make QKD optimal, we need to have efficient quantum and classical channels. A quantum channel becomes efficient when biased preparation and measurement bases are used \cite{Lo2005}. To make the public classical channel efficient, we utilize the concept of \textit{compression}. We define a compression that results in vanishing the classical channel. This compression is called \textit{channel squeezing}.
	\begin{definition}[Channel squeezing]
		It is a process given by the following combination of functions
		\begin{align}\label{compr}
			\mathfrak{compr}_k: & \{\text{0,1}\}^n \xrightarrow[preparing]{\mathcal{P}} \{\{\text{0,1}\}^k\}^m    \xrightarrow[coding]  {\mathcal{C}} \nonumber \\ & \{\{\text{0}\},\{\text{0,1}\}^{\text{2}},\{\text{0,1}\}^{\text{3}},...,\{\text{0,1}\}^r,\{\text{0,1}\}^r\}^m,
		\end{align}
		where $n$ is the number of bits to be announced over a public communication channel (this number of bits is related to the number of transferred qubits or number of setup uses), $m = n/k$ ($n=a\cdot k$, $a$ $\in$ $\mathbb{Z}_+$, $a$ $\rightarrow$ $\infty$ $\Rightarrow$ $n$ $\rightarrow$ $\infty$), and $r$ is related to $k$: $k = \text{log}_{\text{2}}(r+\text{1})$, $r$ = 0,1,..,2$^{k}-\text{1}$. \\
	\end{definition}
	The variable $k$ is called \textit{degree of compression}. Let $D$ = \{0,1\}, $P$ = \{\{0,1\}$^k$\}, $C$ = \{\{0\},\{0,1\}$^{\text{2}}$,\{0,1\}$^{\text{3}}$,...,\{0,1\}$^r$,\{0,1\}$^r$\} be sets of elements. The compression process $\mathfrak{compr}$, as shown in Eq. \eqref{compr}, is a composition of two functions $\mathcal{P}$ and $\mathcal{C}$ called \textit{preparing} and \textit{coding}, respectively. It identifies a transform of $n$ one-bit words (bits) into $m$ codewords with varying length $r$ ($r$ = 0,1,..,2$^{k}-\text{1}$), where $k$ is the parameter of the compression process. This parameter is related to the extent to which a binary sequence could be compressed, as will be shown later on.\\
	\indent The compression process $\mathfrak{compr}$ follows a typical compression algorithm: the messages to be compressed (elements of set $D$ or elements of set $P$) are ordered in a descending fashion\textemdash from the most probable to the least probable \cite{Huffman1952}. We then assign codewords (elements of set \textit{C}) to the ordered messages, as follows. We assign the first element of $C$ (\{0\}) to the most probable message, the second element of $C$ (\{0,1\}$^{\text{2}}$) to the second most probable message, and so forth. The codewords of length greater than 1 (\textit{\textit{i.e.}}, \{0,1\}$^{\text{2}}$, \{0,1\}$^{\text{3}}$, and so on) must meet the requirements of the conventional compression algorithms \cite{Huffman1952}: "\textit{...No message shall be coded in such a way that its code is a prefix of any other message, or that any of its prefixes are used elsewhere as a message code}". In other words, a codeword must not be assigned to two or more messages (it must be assigned to a single message) and a codeword must not play the role of a prefix of another codeword.  \\
	\indent As indicated in \cite{Huffman1952}, a compression process is characterized by the parameter \textit{average length} of the codewords (the elements of the above sets) that is given by
	\begin{equation}
		L_{av} = \sum_{i=\text{1}}^{|\mathcal{X}|} p(x_i)l(x_i),
	\end{equation}
	where $x_i$ identifies an element of a set $\mathcal{X}$, $|\mathcal{X}|$ identifies the cardinality of $\mathcal{X}$, $p(x_i)$ identifies the probability of an element $x_i$ to occur, and $l(x_i)$ identifies the length of the binary representation of an element $x_i$ (the length of the codeword assigned to $x_i$). Note that the product $L_{av} \cdot N$ yields the total amount of bits used to represent $N$ messages. \\
	\indent The compression of a sequence of messages could be quantitatively evaluated by the following expression
	\begin{equation}\label{sigma}
		\sigma = \Big(\text{1} - \frac{L_{av,C}\cdot m}{L_{av}\cdot n}\Big)\cdot \text{100}\; [\%],
	\end{equation}
	where $L_{av,C}$ is the average length obtained after performing $\mathcal{C}$ (average length of set $C$) and $L_{av}$ is the average length of set $D$ (initial set). \\
	\indent For the sake of clarity, we present an example for the channel squeezing in Appendix~\ref{app2}. 
	\subsection{Compression results}\label{results}
	
	\indent What is the possible compression coefficient $\sigma$ that could be achieved for the classical channel of a QKD? To answer this question, we evaluate the function $\sigma$($k$), which is displayed in figure \ref{figure}. We use $n$ = 10$^{\text{30}}$ to obtain the values of this relation. We assume that this value of $n$ is comparable to $n\rightarrow\infty$. The figure clearly shows that $\sigma\rightarrow\text{1}$ (100\%) when $k\rightarrow\infty$. This implies that we could completely compress the public classical channel of a QKD when $k\rightarrow\infty$.
	This asymptotic behavior of $\sigma$ is justified in the proof of the following theorem. 
	
	\begin{figure}[b]
		\centering
		\includegraphics[width=0.5\textwidth]{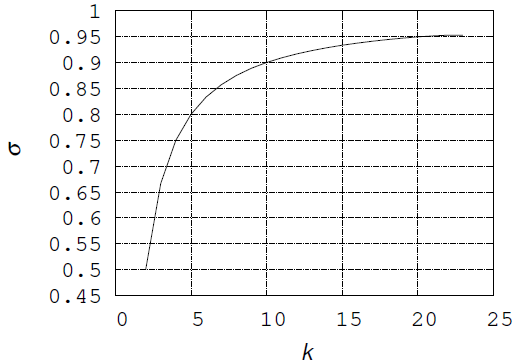}
		\caption{Relation between the compression coefficient $\sigma$ and the degree of compression $k$. The plot is obtained for $n$ = 10$^{\text{30}}$. Due to lack of computation resources, we calculate $\sigma$ only for $k$ = $\text{2,3,...,24}$.\label{figure}}
	\end{figure}

	Based on the above results for the channel squeezing, the following theorem is presented. In the proof of the theorem, we rigorously verify the behavior of $\sigma$. 
	
	\begin{theorem}
		A quantum key distribution protocol (considering only state preparation, state measurement, and sifting procedures) could asymptotically approach the optimal quantum key distribution provided that both \textbf{biased preparation/measurement bases} \text{($p\rightarrow \text{1}$)} and \textbf{compression process $\mathfrak{compr}_k$} are feasible (and used), and the number of transferred qubits tends to infinity \text{($n\rightarrow\infty$)}. 
	\end{theorem}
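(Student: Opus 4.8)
The plan is to reduce the statement to the monotonicity of the total efficiency $\mathfrak{E}$ in the sifting coefficient $s$ and the compression coefficient $\sigma$, and then to show that the two hypotheses push $(s,\sigma)$ to the corner $(1,1)$ where $\mathfrak{E}$ equals the optimality $\mathfrak{O}$. Restricting, as the statement requires, to state preparation, state measurement and sifting, the only classical traffic consists of the $N$ reception-acknowledgement bits of step c-1 together with the two rounds of $(1-\sigma)\tilde\eta N$ basis bits of steps c-2 and c-3, so $M = N + 2(1-\sigma)\tilde\eta N$; with the sifted-key rate $R = s\tilde\eta$ this gives $\mathfrak{E} = \frac{s\tilde\eta}{2 + 2(1-\sigma)\tilde\eta}$, which is manifestly increasing in $s$ and in $\sigma$ on $(0,1]^2$. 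Hence $\mathfrak{O} = \max_{s,\sigma}\mathfrak{E} = \tilde\eta/2$, attained in the limit $s,\sigma\to1$, so it suffices to establish that $p\to1$ forces $s\to1$ and that using $\mathfrak{compr}_k$ while $n\to\infty$ forces $\sigma\to1$, and then to invoke continuity of $\mathfrak{E}(s,\sigma)$. The first of these is immediate: if each party independently picks the $Z$ (resp.\ $X$) basis with probability $p$ (resp.\ $1-p$), the preparation--measurement match probability is $s = p^2 + (1-p)^2$, which tends to $1$ as $p\to1$.

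The second assertion carries the weight. Applying $\mathfrak{compr}_k$ to the announced basis string --- an i.i.d.\ Bernoulli$(1-p)$ word whose length $\tilde\eta N\to\infty$, which is what permits $k\to\infty$ subject to $n=ak$, $a\to\infty$ --- groups the bits into $m=n/k$ length-$k$ blocks and assigns each block a codeword; by \eqref{sigma}, with $L_{av}=1$, the compression coefficient is $\sigma = 1 - L_{av,C}/k$, where $L_{av,C}=\sum_{x\in\{0,1\}^k}\Pr[x]\,l(x)$ and $l(x)$ is the length the coding map assigns to block $x$ (with $\sum_{\text{blocks}}l(\cdot)\approx m L_{av,C}$ by the law of large numbers once $m\to\infty$). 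I would then use the coding rule --- length $\ell$ to the $\ell$-th most probable block --- together with the structure of the biased binomial law on blocks: the all-zeros block, of probability $p^k$, receives length $1$, whereas a block with $j$ ones, of probability $p^{k-j}(1-p)^j$, has rank at most $\sum_{i\le j}\binom{k}{i}\le 2^k$. A crude bound then gives $L_{av,C}\le p^k + 2^k(1-p^k)$, which tends to $1$ as $p\to1$; hence $\sigma = 1 - L_{av,C}/k \to 1 - 1/k \to 1$, which is exactly the $\sigma(k)$ behaviour exhibited in figure~\ref{figure}. Combining the two assertions, continuity of $\mathfrak{E}(s,\sigma)=\frac{s\tilde\eta}{2+2(1-\sigma)\tilde\eta}$ gives $\mathfrak{E}\to\tilde\eta/2=\mathfrak{O}$, i.e.\ the protocol asymptotically attains the optimal QKD.

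The main obstacle is the estimate of $L_{av,C}$, which is delicate precisely because the code in the channel-squeezing definition is a unary-type code rather than the Huffman optimum: for a \emph{fixed} $p<1$ a typical block has rank of order $2^{kH(p)}$, hence codeword length of that order, so $\sigma$ would become negative (net expansion) as $k\to\infty$. The argument must therefore control the joint rate of $p\to1$ and $k\to\infty$ --- for instance imposing $k(1-p)\to0$, so that $2^k(1-p^k)\to0$, or else taking the iterated limit $\lim_{k\to\infty}\lim_{p\to1}$ --- and it must upgrade the heuristic $\sigma\approx 1-L_{av,C}/k$ to a genuine law-of-large-numbers / typical-set statement about the sequence of blocks. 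The remaining steps are routine.
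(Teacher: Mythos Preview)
Your approach is essentially the paper's: establish $s\to 1$ from the biased bases (the paper keeps only the $p^2$ term rather than $p^2+(1-p)^2$, but the limit is the same), then argue $L_{av,C}\to 1$ so that $\sigma = 1-1/k\to 1$ as $k\to\infty$. Your packaging via the explicit restricted efficiency $\mathfrak{E}(s,\sigma)=\frac{s\tilde\eta}{2+2(1-\sigma)\tilde\eta}$, monotonicity and continuity is more systematic than the paper, which argues informally that the quantum channel ``tends to ideal'' and the classical channel ``could be neglected,'' but the skeleton is identical.

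The joint-limit obstacle you flag is real, and the paper does not resolve it either: it simply asserts ``for $p\to 1$, $L_{av,C}\to 1$'' by pointing to the $k=2$ example, and then lets $k\to\infty$ --- i.e.\ it tacitly takes the iterated limit $\lim_{k\to\infty}\lim_{p\to 1}$ that you list as one of the acceptable readings. No rate condition linking $p$ and $k$ (such as your $k(1-p)\to 0$) appears in the paper, nor any typical-set argument. So your concern is not a gap in your proposal relative to the paper; it is a gap in the statement/proof as the paper has it, and your crude bound $L_{av,C}\le p^k + 2^k(1-p^k)$ already makes the iterated-limit reading rigorous, which is more than the paper does.
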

	\begin{proof}
		We consider a general QKD. Let $(p,\text{1}-p)$ denote the probability distribution of a random variable $Y$ representing the preparation/measurement bases ($Y$ = \{$Z$,$X$\}, $Z$ and $X$ are the well-known polarization bases \cite{Bennett1984} or phase bases \cite{Lucamarini2018}).  As shown in \cite{Lo2005}, without compromising the security of the QKD model, $p$ could go to unity given that $n$ $\rightarrow$ $\infty$ ($n$\textemdash number of transferred quantum systems in a QKD protocol or number of uses of a QKD system). Since $p\rightarrow \text{1}$ and the number of quantum systems used for checking eavesdropping is fixed, we could assume that almost one preparation/measurement basis (that with probability $p$) is used
		for establishing key bits ($p^{\text{2}}$ is the probability of establishing a key bit with this encoding basis). The other basis (that with probability $\text{1}-p$) is also used for establishing key bits \cite{Lo2005}, however, it contributes considerably less to the establishing process ($(\text{1}-p)^{\text{2}}$ is the probability of establishing a key bit with this basis).
		Therefore, we could assume that one of the bases is used for establishing key bits. Thus, the probability of establishing key bits (in the asymptotic case) is $p^{\text{2}}$. Given that $p\rightarrow \text{1}$, then $p^{\text{2}}$ also tends to unity. In this regard, the number of established key bits, say $K$, asymptotically tends to $n$ ($K$ $\rightarrow$ $n$). This implies that the efficiency of the quantum channel tends to unity (or 100\%) for the considered case ($p\rightarrow \text{1}$, $n\rightarrow\infty$). This analysis is based on the (yet presented in the) work of Ref. \cite{Lo2005}.\\
		\indent In the case of ($p\rightarrow \text{1}$, $n\rightarrow\infty$), it is possible that $k\rightarrow\infty$ [\textit{Note}: $k\rightarrow\infty$ is possible if and only if  $n\rightarrow\infty$]. For $k\rightarrow\infty$, the compression coefficient $\sigma$ of $\mathfrak{compr}_k$ asymptotically approaches unity (or 100\%). This is proved in the following way. For $p\rightarrow\text{1}$, $L_{av,C}\rightarrow\text{1}$. This could be readily verified by using the example presented in Table \ref{examp2} and by the expression
		\begin{equation}
			L_{av,C} = \sum_{i=\text{0}}^{{\text{2}}^k-\text{1}} l(x_i)p(x_i),
		\end{equation}
		where 
		\begin{equation}
			l(x_i) = \begin{cases}
				i+\text{1}, \; i < {\text{2}}^k-\text{1} \\
				i, \; i = {\text{2}}^k-\text{1}
			\end{cases}
		\end{equation}
		and
		\begin{equation}
			p(x_i) = \begin{cases}
				p^{\text{log}_{\text{2}}k},  \; i=\text{0}  \\
				p^{\text{log}_{\text{2}}k-\text{1}}(\text{1}-p), \; i=\text{1} \\
				p^{\text{log}_{\text{2}}k-\gamma_i}(\text{1}-p)^{\gamma_i}, \; i > \text{1}
			\end{cases}
		\end{equation}
		where $x_i$ is an element of set $P$ (\textit{e.g.}, $P$ = $\{\text{0,1}\}^{\text{2}}$ $\Rightarrow$ $x_{\text{0}}$ = 00, $x_{\text{1}}$ = 01, $x_{\text{2}}$ = 10, and $x_{\text{3}}$ = 11). The parameter $\gamma_i$ is given by $\text{1}+\gamma_{i-\text{2}^{\left \lfloor{\text{log}_\text{2}i}\right \rfloor}}$ and identifies the number of 1s presented in an element $x_i$ [\textit{Note:} The above expression of $\gamma_i$ holds for $i>\text{1}$, $\gamma_{\text{0}}$ = 0 and $\gamma_{\text{1}}$ = 1 by default].  Also, $L_{av}$ is always equal to unity (by definition). This leads to the following form of $\sigma$ (see for reference Eq. \eqref{sigma})
		\begin{equation}
			\sigma = \Big(\text{1} - \frac{m}{n}\Big)\cdot \text{100}
		\end{equation}
		Substituting $m$ by $n/k$ (according to its definition given above), we yield
		\begin{equation}
			\sigma = \Big(\text{1} - \frac{\frac{n}{k}}{n}\Big)\cdot \text{100},
		\end{equation}
		which takes the final form
		\begin{equation}
			\sigma = \Big(\text{1} - \frac{\text{1}}{k}\Big)\cdot \text{100}.
		\end{equation}
		As stated above, when $k\rightarrow\infty$, $\sigma\rightarrow\text{1}$ (100\%).\\
		\indent This result implies that the classical channel of the considered QKD could be completely compressed, \textit{\textit{i.e.}}, it could be neglected. \\
		\indent As a result, the considered QKD protocol (operating at $p\rightarrow \text{1}$, $n\rightarrow\infty$, $k\rightarrow\infty$) obtains an asymptotically ideal quantum channel and asymptotically irrelevant classical channel (a channel to be neglected or ideal classical channel). In this way, we prove the theorem. 
	\end{proof}

	\section{Optimal QKD protocols}\label{protocols}
	Based on optimality definition and evaluation (sections \ref{opt-def} and \ref{opt-eval}), we present two \textit{optimal} QKD protocols (protocols in which both extremely biased preparation/measurement bases \cite{Lo2005} and channel squeezing are involved) in the following lines. The two protocols are: \textbf{(I)} a \textit{point-to-point} (\textit{BB84-like}) \cite{Bennett1984,Lo2005} \textit{optimal QKD protocol}; \textbf{(II)} a \textit{point-relay-point} (twin-field like, \textit{Lucamarini2018-like}) \cite{Lucamarini2018,Curty2019} \textit{optimal QKD protocol}. We use channel squeezing of type $\mathfrak{compr}_k$, where $k\rightarrow\infty$. \\
	\indent The protocol of type \textbf{(I)} (operating in the \textit{single-photon regime}) is characterized by the following steps: \\
	\textbf{Step 1.} Alice prepares a sequence of $N$ qubits ($N\rightarrow\infty$). Each qubit $q_n$ resides in the state $\hat{H}(b_n)\ket{q_n}$, where $n$ $=$ $1$,$2$,$...$,$N$ ($n$ $\in$ $\mathcal{N}$, $\mathcal{N} \subset \mathbb{Z}_+$, $N$ = $|\mathcal{N}|$ is cardinality of $\mathcal{N}$). Note that $q$ $\in$ \{0,1\}, \textit{i.e.}, $\ket{q_n}$ $\in$ \{$\ket{\text{0}_n}$,$\ket{\text{1}_n}$\}. The symbol $q$ denotes a realization of a random variable $Q$ ($P_Q$ = ($p_{Q}$,$\text{1}-p_{Q}$); $p_Q$ = $p(q=\text{0})$, $\text{1}-p_Q$ = $p(q=\text{1})$; $p_Q$ = 0.5). The operator $\hat{H}(b)$ is a \textit{conditional Hadamard operator}. It is defined as follows
	\begin{equation}
		\hat{H}(b) = \begin{cases}
			\hat{I} \; \text{if $b = 0$;}\\
			\hat{H} \; \text{if $b = \text{1}$.}
		\end{cases}
	\end{equation}
	where $\hat{I}$ is the identity operator and $\hat{H}$ is the Hadamard operator. Applying $\hat{I}$ corresponds to a qubit $q_n$ prepared in a $Z$ polarization basis. Applying $\hat{H}$ corresponds to a qubit $q_n$ prepared in a $X$ polarization basis. The random variable $B = b$ identifies the so-called \textit{basis} (\textit{e.g.}, polarization basis) in terms of which a qubit is prepared. It is characterized with probability distribution $P_B$ = ($p_B$,$\text{1}-p_B$) [\textit{Note}: $p_B$ = $p(b=\text{0})$; $\text{1}-p_B$ = $p(b=\text{1})$]. In the presented protocol, $P_B$ is not uniform ($p_B$ $\neq$ 0.5, $p_B\rightarrow1$), as in the protocol of Ref. \cite{Lo2005}. Alice sends out the $N$ qubits to Bob over a quantum communication channel. \\
	\textbf{Step 2.} Bob receives the $N$ qubits (in this representation of the protocol, we assume that there is no loss of qubits during their transfer). He performs a measurement $\hat{M}_n$ to each qubit $q_n$. The measurement operator is given by its spectral decomposition 
	\begin{equation}
		\hat{M}_n = \sum_q \lambda_{q,n}\hat{H}(b'_n)\ket{q_n}\bra{q_n}\hat{H}^{\dagger}(b'_n),
	\end{equation}
	where $\lambda_{q,n}$ $\in$ \{$\lambda_{\text{0},n}=\text{1}$,$\lambda_{\text{1},n}=-\text{1}$\} are the eigenvalues of the measurement operator $\hat{M}_n$ and $b'$ is a realization of a random variable $B'$ ($P_B'$ = ($p_{B'}$,$\text{1}-p_{B'}$); $p_{B'}$ $\neq$ 0.5). The operator $\hat{H}(b'_n)$ is defined similarly to  $\hat{H}(b_n)$:
	\begin{equation}
		\hat{H}(b') = \begin{cases}
			\hat{I} \; \text{if $b' = \text{0}$;}\\
			\hat{H} \; \text{if $b' = \text{1}$.}
		\end{cases}
	\end{equation}
	Based on the results of the measurements $\hat{M}_n$ (the results are eigenvalues $\lambda_{q,n}$), Bob forms the so-called \textit{raw key} that is comprised of bits $k_{B,n}$, where $k_{B,n}$ = 0 if $\lambda_{q,n}=\text{1}$ and $k_{B,n}$ = 1 if $\lambda_{q,n}=-\text{1}$. Point out that Alice forms her \textit{raw key} bits $k_{A,n}$ based on the variable $Q=q$: $k_{A,n}$ = $q$ ($k_{A,n}$ = 0 if $q=\text{0}$ and $k_{A,n}$ = 1 if $q=\text{1}$). \\
	\textbf{Step 3.} Bob announces $\mathfrak{compr}_k$($b'_n$). Alice decompresses $\mathfrak{compr}_k$($b'_n$) [$\mathfrak{compr}_k^{-\text{1}}$($\mathfrak{compr}_k$($b'_n$))] in order to obtain the original values of $b'_n$. Alice compares her $b_n$ to Bob's $b'_n$. Based on this comparison, Alice forms a sequence of $N$ bits $d_n$, where each $d_n$ is determined by 
	\begin{equation}
		d_n = \begin{cases}
			\text{0} \; \text{if $b_n=b'_n$;}\\
			\text{1}  \; \text{otherwise.}
		\end{cases}
	\end{equation}
	The bits $d_n$ are publicly announced by Alice. Actually, Alice announces $\mathfrak{compr}_k$($d_n$) (a compressed sequence of $d_n$). In order to obtain the original $d_n$, Bob decompresses $\mathfrak{compr}_k$($d_n$) [$\mathfrak{compr}_k^{-\text{1}}$($\mathfrak{compr}$($d_n$))]. These bits are used to correlate the raw keys of Alice and Bob. This procedure (correlation) is known as \textit{sifting}. In the sifting, Alice and Bob discard those bits of their initial (raw) keys $k_{A,n}$ and $k_{B,n}$, for which $b_n \neq b'_n$ ($d_n=\text{1}$). As a result, Alice and Bob yield the so-called \textit{sifted keys} $k_{A,f}$ and $k_{B,f}$, respectively. Note that $f \in \mathcal{F}$, $\mathcal{F}$  $\subset$ $\mathcal{N}$, $F$ = $|\mathcal{F}|$ is cardinality of $\mathcal{F}$. Based on $d_n$, Alice (Bob) reduces her (his) sequence of key bits $k_{A,n}$ to a sequence of $k_{A,f}$.  \\
	\textbf{Step 4.} The \textit{parameter estimation} procedure begins. Alice (Bob) divides her (his) sifted key bits $k_{A,f}$ ($k_{B,f}$) into $k_{A,v}$ ($k_{B,v}$) and $k_{A,w}$ ($k_{B,w}$). The bits $k_{A,v}$ are those obtained when $X$ basis is used for both preparation and measurement. The bits $k_{A,w}$ are those obtained when $Z$ basis is used for both preparation and measurement. Note that $v \in \mathcal{V}$, $\mathcal{V}$ $\subset$ $\mathcal{F}$, $V$ = $|\mathcal{V}|$ is cardinality of $\mathcal{V}$; $w \in \mathcal{W}$, $\mathcal{W}$ = $\mathcal{V}$$^c$ $\subset$ $\mathcal{F}$, $W$ = $|\mathcal{W}|$ is cardinality of $\mathcal{W}$. Alice picks $k_{A,v'}$ bits out of $k_{A,v}$, and $k_{A,w'}$ bits out of $k_{A,w}$. These bits will be used for evaluating two error rates\textemdash error rate of $X$ basis and error rate of $Z$ basis \cite{Lo2005}. Here $v' \in \mathcal{V'}$, $\mathcal{V'}$ $\subset$ $\mathcal{V}$, $V'$ = $|\mathcal{V'}|$ is cardinality of $\mathcal{V'}$; $w' \in \mathcal{W'}$, $\mathcal{W'}$ $\subset$ $\mathcal{W}$, $W'$ = $|\mathcal{W'}|$ is cardinality of $\mathcal{W'}$. Alice announces $k_{A,v'}$ and $k_{A,w'}$. Bob determines the following two \textit{quantum bit error rates} $QBER_X$ and $QBER_Z$:
	\begin{equation}
		QBER_X = \frac{\sum_{v'} g_{v'}(k_{A,v'},k_{B,v'})}{V'},
	\end{equation}
	\begin{equation}
		QBER_Z = \frac{\sum_{w'} g_{w'}(k_{A,w'},k_{B,w'})}{W'},
	\end{equation}
	where the function $g_x(k_{A,x},k_{B,x})$ is given by
	\begin{equation}
		g_x(k_{A,x},k_{B,x}) = \begin{cases}
			\text{0} \; \text{if $k_{A,x}=k_{B,x}$;}\\
			\text{1} \; \text{otherwise.}
		\end{cases}
	\end{equation}
	$x$ being a dummy variable. Alice and Bob terminate the current session of the protocol if both $QBER_X$ and $QBER_Z$ exceed a preliminarily determined threshold $\epsilon$ ($QBER_{X(Z)} > \epsilon$) \cite{Lo2005}. Otherwise, Alice and Bob proceed forward [\textit{Note:} Bob sends to Alice a binary message: 0 corresponds to \textit{terminating the protocol} and 1 corresponds to \textit{proceeding forward}]. After the parameter estimation procedure (this step), Alice (Bob) possesses a key $k_{A,v''}\textbf{\&}k_{A,w''}$ ($k_{A,v''}\textbf{\&}k_{A,w''}$), where $\textbf{\&}$ identifies concatenation of bits. Here $v'' \in \mathcal{V''}$, $\mathcal{V''}$ = $\mathcal{V'}$$^c$ $\subset$ $\mathcal{V}$, $V''$ = $|\mathcal{V''}|$ is cardinality of $\mathcal{V''}$; $w'' \in \mathcal{W''}$, $\mathcal{W''}$ = $\mathcal{W'}$$^c$ $\subset$ $\mathcal{W}$, $W''$ = $|\mathcal{W''}|$ is cardinality of $\mathcal{W''}$.\\
	\textbf{Step 5.} Key reconciliation is performed on $k_{A,v''}\textbf{\&}k_{A,w''}$ and $k_{A,v''}\textbf{\&}k_{A,w''}$ \cite{Mehic2020,Brassard1994}. \\
	\textbf{Step 6.} Privacy amplification is performed \cite{Bennett1995,Yuan2018}. \\
	\indent The cardinalities of the subsets in the above protocol have the following magnitudes and are related to each other as follows: $N\rightarrow\infty$; $F$ = $\left \lfloor{[p_B^{\text{2}}+(\text{1}-p_B)^{\text{2}}]N}\right \rfloor$; $V$ = $\left \lfloor{p_B^{\text{2}}N}\right \rfloor$; $W$ = $\left \lfloor{(\text{1}-p_B)^{\text{2}}N}\right \rfloor$; $V' = \varepsilon V$; $W' =  \lambda W$; $V''$ = $V - V'$; $W''$ = $W - W'$. Here $\varepsilon$ and $\lambda$ are considerably small decimal numbers. \\
	\indent The protocol of type \textbf{(II)} (operating in the \textit{weak-coherent-pulse regime}) is characterized by the following steps (adopted from \cite{Curty2019}): \\
	\textbf{Step 1.} Alice (Bob) prepares $N$ weak coherent pulses (WCPs) $w_{n,A(B)}$ [\textit{Note}: we use the notation from the previously introduced protocol: $n\in\mathcal{N}$, $\mathcal{N}\subset\mathbb{Z}_+$, $N$ = $|\mathcal{N}|$ is the cardinality of $\mathcal{N}$, $N\rightarrow\infty$]. She (he) chooses a preparation basis $X$ ($Z$) with probability $p(X)$ ($p(Z) = \text{1} - p(X)$) for each $w_{n,A(B)}$. The probability $p(X)$ tends to unity ($p(X)\rightarrow1$). If $X$ is chosen, Alice (Bob) sets a $w_{v,A(B)}$ in a coherent state $\ket{\alpha_v}_{A(B)}$ when she (he) wishes to establish a key bit $b_A = \text{0}$ ($b_B = \text{0}$). Alice (Bob) sets a $w_{v,A(B)}$ in a coherent state $\ket{-\alpha_v}_{A(B)}$ when she (he) wishes to establish a key bit $b_A = \text{1}$ ($b_B = \text{1}$). Here $v\in\mathcal{V}$, $\mathcal{V}\subset\mathcal{N}$, $V$ = $|\mathcal{V}|$ is the cardinality of $\mathcal{V}$; the subset $\mathcal{V}$ identifies the weak coherent pulses prepared in $X$ basis.  The key bits $b_A$ and $b_B$ are randomly chosen (by Alice and Bob, respectively) and independently from each other. If $Z$ is chosen, Alice (Bob) sets a $w_{u,A(B)}$ in a phase-randomizing state $\ket{\beta_u}_{A(B)}$. Here $u\in\mathcal{U}$, $\mathcal{U}=\mathcal{V}$$^c$ $\subset\mathcal{N}$, $U$ = $|\mathcal{U}|$ is the cardinality of $\mathcal{U}$; the subset $\mathcal{U}$ identifies the weak coherent pulses prepared in $Z$ basis.  The amplitude $\beta$ of the phase-randomizing state is picked from the set $S=\{\beta_i\}_i$, where $\beta_i$ are real non-negative numbers ($\beta_i \geq \text{0}$). Alice (Bob) picks an amplitude $\beta_i$ for each $w_{u,A(B)}$ in accordance with a probability distribution $p_{\beta,A(B)}$.   \\
	\textbf{Step 2.} Alice and Bob send their coherent states $w_{n,A(B)}$  over an optical channel with transmittance $\sqrt{\eta}$ to Charlie (a relay node). They synchronize the transfer of their states.         \\
	\textbf{Step 3.} Charlie interferes the coherent states $w_{n,A}$ and $w_{n,B}$ at a balanced beamsplitter. A threshold detector is placed at each output of the beamsplitter: detector $D_c$ at output $c$ and detector $D_d$ at output $d$ ($c$\textemdash output that corresponds to constructive interference; $d$\textemdash output that corresponds to destructive interference).        \\
	\textbf{Step 4.} For each interference between $w_{n,A}$ and $w_{n,B}$, Charlie announces measurement outcome $c_n$ ($d_n$) corresponding to detector $D_c$ ($D_d$), where $c_n = \text{0}$ and $c_n = \text{1}$ ($d_n = \text{0}$ and $d_n = \text{1}$) indicate a non-click event and a click event, respectively.          \\ 
	\textbf{Step 5.} Both Alice and Bob announce their preparation bases $h_{A(B)}$. For this purpose, they use $\mathfrak{compr}_k$: $\mathfrak{compr}_k(h_{A(B)})$. The value of $k$ should be as greater as possible. In order to learn the preparation bases of Bob (Alice) from the announcement $\mathfrak{compr}_k(h_{A(B)})$, Alice (Bob) decompresses $\mathfrak{compr}_k(h_{B})$ ($\mathfrak{compr}_k(h_{A})$). Based on this announcement, Alice and Bob sift their sequences of coherent states $w_{n,A}$ and $w_{n,B}$. They discard those states for which their bases do not match. After this procedure, Alice and Bob are left with sequences of coherent states $w_{f,A}$ and $w_{f,B}$, respectively. Here $f\in\mathcal{F}$, $\mathcal{F}\subset\mathcal{N}$, $F$ = $|\mathcal{F}|$ $<$ $N$ is the cardinality of $\mathcal{F}$, and $\mathcal{U'}$,$\mathcal{V'}$ $\supset$ $\mathcal{F}$. The subsets $\mathcal{U'}$ and $\mathcal{V'}$ ($\mathcal{U'}=\mathcal{V'}$$^c$) identify sifted $Z$-basis states (phase-randomizing states) and sifted $X$-basis coherent states, respectively. Based on the interference of $w_{u',A}$ and $w_{u',B}$ (correspondingly, the announced measurement outcomes for these pulses), Alice and Bob form the so-called \textit{sifted key}. Note that when Charlie reports $c_{u'} = \text{1}$ and $d_{u'} = \text{0}$ ($c_{u'} = \text{0}$ and $d_{u'} = \text{1}$), the $u'$th bits of their sifted keys are related as follows: $b_{u',A}$ = $b_{u',B}$ ($b_{u',A}$ = $b_{u',B}\oplus\text{1}$).\\
	\textbf{Step 6.} Alice and Bob perform the parameter estimation procedure, which involves evaluating the \textit{QBER} and checking the presence of an eavesdropper by using the phase-randomizing pulses $w_{v',A}$ and $w_{v',B}$ of $Z$ basis (the decoy-state method is implemented for this purpose).       \\
	\textbf{Step 7.} Alice and Bob perform key reconciliation. \\
	\textbf{Step 8.} Alice and Bob perform privacy amplification.

	\section{Summary}\label{summary}
	In summary, this work introduces a way to determine and obtain the optimality of a quantum key distribution. The quantity \textit{optimality} ($\mathfrak{O}$) is defined. It is given by the maximization of the total efficiency of a QKD over its protocol parameters, see expression \eqref{opt-expr}. It is verified that an optimal QKD is realizable when using an extremely biased choice of preparation and measurement bases\textemdash capacity-reaching quantum channel and a completely compressed classical channel are attainable. The process of compression is used to make the public classical channel of a QKD system efficient\textemdash to almost (asymptotically) remove it from the QKD picture.
	A comparison between optimal BB84-QKD (function $\mathfrak{O}^{\text{BB84}}(L)$) and a standard BB84-QKD (function $\mathfrak{E}^{\text{BB84}}(L)$) is performed, see figure \ref{figure2}. As expected, it is verified that the optimal implementation displays higher efficiency (optimality) values.
	Moreover, two optimal QKD protocols are presented: a BB84-like optimal QKD and a Lucamarini2018-like optimal QKD. In this way, we demonstrate how optimal QKD  could be achieved (both extremely biased preparation/measurement bases \cite{Lo2005} and complete classical channel compression could be applied to a QKD).
	
	\section*{Acknowledgment}
	This work is supported by Technical University of Varna under Scientific Project Programme, Grant No. $\text{H}\Pi$6/2025. This research is also incorporated into the framework of Scientific Programme "Increasing National Scientific Capacity in the Field of Quantum Information", Bulgarian Ministry of Education and Science.

	\ifCLASSOPTIONcaptionsoff
	\newpage
	\fi
	
	\appendices
		\section{Example of channel squeezing}
		\label{app2}
		\indent Consider announcement of measurement bases ($Z$ and $X$ bases) in a QKD process. Suppose $p(Z)\rightarrow\text{1}$, correspondingly $p(X)\rightarrow\text{0}$ ($p(X) = \text{1} - p(Z)$). Suppose further that the use of measurement basis $Z$ is announced by binary 0, whereas the use of measurement basis $X$ is announced by binary 1. Consider now that $\mathfrak{compr}_k=\mathfrak{compr}_{\text{2}}$ (composition of functions $\mathcal{P}$ and $\mathcal{C}$) is performed on the announced sequence of 0s and 1s (sequence of measurement bases). This implies that the elements of the set $P$ (result of function $\mathcal{P}$) are of length 2\textemdash \{0,1\}$^k$ = \{0,1\}$^{\text{2}}$, as shown in Eq. \eqref{compr}. Table \ref{examp1} depicts the result of function $\mathcal{P}$ (set $P$), the first step of $\mathfrak{compr}_{\text{2}}$. The table also involves the probabilities of occurrence of the elements of set $P$. The elements of set $P$ are ordered in a descending fashion, as required by the compression algorithm itself (as mentioned above). Now we perform function $\mathcal{C}$ on the elements of set $P$. According to the compression algorithm mentioned above, we assign a corresponding element of set $C$, as shown in Table \ref{examp2}, to the corresponding ordered element of set $P$. In this way, we complete the example.    \\
		\begin{table}[!hbt]
			\centering
			\caption{\label{examp1}%
				Performing function $\mathcal{P}$ (of $\mathfrak{compr}_{\text{2}}$) on a sequence of bits (sequence of announced measurement bases). The elements of set $P$ (result of $\mathcal{P}$) are ordered in a descending fashion in terms of their probabilities of occurrence. Suppose that $p(Z)$ = 0.999 (corresponding to $p(Z)\rightarrow$ 1) and $p(X)$ = 0.001 (corresponding to $p(X)\rightarrow$ 0).}
			
			\begin{tabular}{cl}
				
				\toprule
				Elements of set $P$ & Probabilities \\
				\toprule
				00 & $p(Z)^{\text{2}}$ = 0.998001 \\
				01 & $p(Z)p(X)$ = 0.000999 \\
				10 & $p(X)p(Z)$ = 0.000999 \\
				11 & $p(X)^{\text{2}}$ = 10$^{\text{-6}}$ \\
				\bottomrule
			\end{tabular}
			
		\end{table}
		
		\begin{table}[!hbt]
			\centering
			\caption{\label{examp2}%
				Performing function $\mathcal{C}$ (of $\mathfrak{compr}_{\text{2}}$) on the ordered elements of set $P$. This table displays the final result of $\mathfrak{compr}_{\text{2}}$\textemdash the elements of $C$.}
			
			\begin{tabular}{cll}
				
				\toprule
				Elements of set $P$ & Probabilities & Elements of set $C$ \\
				\toprule
				00 & $p(Z)^{\text{2}}$ = 0.998001 & \{0\} = 0 \\
				01 & $p(Z)p(X)$ = 0.000999 & \{0,1\}$^{\text{2}}$ = 10 \\
				10 & $p(X)p(Z)$ = 0.000999 & \{0,1\}$^{\text{3}}$  = 110 \\
				11 & $p(X)^{\text{2}}$ = 10$^{\text{-6}}$ & \{0,1\}$^{\text{3}}$  = 111 \\
				\bottomrule
			\end{tabular}
			
		\end{table}


\begin{thebibliography}{1}
		
		\bibitem{Bennett1984}
			Ch.~Bennett and  G.~Brassard, \emph{Quantum cryptography: Public key distribution and coin tossing}, International Conference on Computers, Systems \& Signal Processing, Vol. 1,
			pp. 175-179, 1984.
		\bibitem{Ekert1991}
			A.~Ekert, \emph{Quantum cryptography based on Bell’s theorem}, Phys.\ Rev. Lett., Vol. 67, pp. 661, 1991.
		\bibitem{Lo2005}
			H.-K.~Lo and H.~Chau and M.~Ardehali, \emph{Efficient Quantum Key Distribution Scheme and a Proof of Its Unconditional Security}, J. Cryptology, Vol. 18, pp. 133-165, 2005.
		
		\bibitem{Lo2012}
			H.-K.~Lo and M.~Curty and B.~Qi,
			\emph{Measurement-Device-Independent Quantum Key Distribution}, Phys.\ Rev. Lett., vol. 108, pp. 130503, 2012.
		
		\bibitem{Lucamarini2018}
			M.~Lucamarini and  Z.~Yuan and  J.~Dynes and  A.~Shields, \emph{Overcoming the rate–distance limit of quantum key distribution without quantum repeaters},
			Nature, Vol. 557, pp. 400, 2018.
		
		
		\bibitem{Curty2019}
		M.~Curty and K.~Azuma  and  H.-K.~Lo, 
		\emph{Simple  security  proof  of  twin-ﬁeld  type  quantum  key distribution  protocol}, npj Quantum Information, Vol. 5, pp. 64, 2019.
		
		\bibitem{Wang2018}
			X.-B~Wang and  Z.-W.~Yu and X.-L.~Hu, 
			\emph{Twin-ﬁeld quantum key distribution with large misalignment error}, Phys.\ Rev. A, Vol. 98, pp. 062323, 2018.
		
		\bibitem{Huffman1952}
		D.~Huffman, \emph{A Method for the Construction of Minimum-Redundancy Codes}, Proceeding of IRE, Vol. 40, pp. 1098-1101, 1952.
		
			\bibitem{Gottesman}
		D.~Gottesman and H.-K.~Lo and N.~ L{\"u}tkenhaus and J.~Preskill, 
		\emph{Security of quantum key distribution with imperfect devices}, Quantum Information and Computation, Vol. 4, pp. 325-460, 2004.
		
		\bibitem{Bebrov2024}
		G.~Bebrov, \emph{Encoding Optimum of Quantum Key Distribution Model}, Geometry, Integrability and Quantization, Vol. 18, pp. 1-11, 2024.
		
		\bibitem{Bebrov}
		G.~Bebrov, \emph{On the (relation between) efficiency and secret key rate of QKD}, Scientific Reports, Vol. 14, pp. 3638, 2024.
		
		\bibitem{Ma}
		X.~Ma and B.~Qi and Y.~Zhao and H.-K.~Lo, \emph{Practical decoy state for quantum key distribution}, Phys. Rev. A, Vol. 72, pp. 012326, 2005.
		
		\bibitem{Mehic2020}
		M.~Mehic and M.~Niemiec and H.~Siljak and M.~Voznak, \emph{Error Reconciliation in QUantum Key Distribution Protocols}, Reversible Computation: Extending Horizons of Computing: Selected Results of the COST Action IC1405, Eds: I.~Ulidowski and I.~Lanese and U.~Schultz and C.~Ferreira, Springer International Publishing, pp. 222-236, 2020.
		
%
		
		\bibitem{Brassard1994}
			G.~Brassard and  L.~Salvail,
			\emph{Secret-Key Reconciliation by Public Discussion}, Advances in Cryptology, Eds. Tor Helleseth, Springer, Heidelberg, Vol. 765, pp. 410, 1994. 
		
		\bibitem{Bennett1995}
			Ch.~Bennett and G.~Brassard and C.~Crepeau and U.~ Maurer, \emph{Generalized privacy amplification}, IEEE Trans. Inf. Theory, vol. 41, pp. 1915, 1995.
		
		\bibitem{Yuan2018}
			Z.-L. Yuan \emph{et al.}, \emph{10 Mb/s quantum key distribution}, J. Light. Technol., Vol. 36, pp. 3427-3433, 2018.
		
	
		
		
		
%
%
%
%
%
		
		
	\end{thebibliography}
\end{document}